\documentclass[12pt,a4paper]{article}
\usepackage[dvipdfmx,hiresbb]{graphicx}
\usepackage[dvipdfmx]{color}

\usepackage{color}
\usepackage{bm,enumerate,amsmath,amssymb,amsthm}
\usepackage{epsfig}
\usepackage{cite}
\usepackage{algorithm}
\usepackage{algpseudocode}
\usepackage{txfonts}
\usepackage{subfigmat}
\usepackage{figsize}
\usepackage{here}
\usepackage{listings}
\usepackage{braket}
\usepackage{comment}
\usepackage{lscape}
\usepackage{bigints}

\usepackage{arxiv}
\usepackage[utf8]{inputenc} 

\theoremstyle{definition}
\newtheorem{theorem}{Theorem}
\newtheorem*{theorem*}{Theorem}
\newtheorem{definition}{Definition}
\newtheorem*{definition*}{Definition}

\newtheorem{lemma}{Lemma}

\newtheorem{remark}{Remark}

\newtheorem{assumption}{Assumption}

\renewcommand{\headeright}{}

\title{Lie-Group Mode Connectivity in Quantum Machine Learning from a Dynamical Lie Algebra Perspective}

\author{
  Hiroshi Ohno\\
  Toyota Central R\&D Labs., Inc.\\
  Aichi, Japan\\
  \texttt{oono-h@mosk.tytlabs.co.jp}\\
}

\date{\empty}

\begin{document}

\maketitle

\begin{abstract}
  Mode connectivity has been widely studied in classical machine learning as a geometric property of low-loss regions in parameter space.
  In quantum machine learning (QML), however, the physically relevant object is not the parameter vector itself but the unitary transformation implemented by a parameterized quantum circuit.
  In this study, we formulate mode connectivity on the reachable unitary Lie group generated by the dynamical Lie algebra of the generators.
  We show that, under a near-minimum connectedness assumption and the absence of critical values in a low-loss band, the corresponding low-loss sublevel set on the reachable Lie group is path-connected.
  This provides a geometric interpretation of mode connectivity in QML that is independent of a particular parameterization.
  We further discuss how overparameterization can enable the lifting of Lie-group paths to parameter space, thereby making Lie-group connectivity observable in parameter-space experiments.
  Finally, we present toy numerical experiments in which geodesic interpolations between trained unitaries exhibit nearly zero loss barriers, consistent with the proposed interpretation.
\end{abstract}

\section{Introduction}
In deep learning, mode connectivity has been argued to be an important geometrical condition for consistently coexisting several inference abilities in a large language model \cite{garipov2018,draxler2018,chen2025,theus2025}.
In classical machine learning, many studies have investigated mode connectivity heuristically in model parameter space.

In quantum machine learning (QML), mode connectivity in the loss landscape of parameterized quantum circuits has been observed experimentally, but its theoretical interpretation remains limited \cite{hamilton2022,hamilton2021}.
Thus, a theoretical understanding of mode connectivity in QML remains incomplete.

In this study, we do not aim to prove parameter-space mode connectivity in the classical or quantum conventional sense.
Instead, we demonstrate that mode connectivity in QML arises from the topology of low-loss sublevel sets on the dynamical Lie group, and overparameterization enables lifting of these Lie-group paths to parameter space.
We provide a theoretical interpretation of mode connectivity for parameterized unitaries generated by a set of generators.
In contrast to classical deep learning, where mode connectivity is usually formulated in parameter space, we formulate mode connectivity in QML in reachable unitary group generated by dynamical Lie algebra (DLA).
This is because the physically relevant object is the unitary transformation implemented by a parameterized unitary, while the parameters provide only a coordinate representation of this transformation.
Furthermore, toy numerical experiments are conducted to provide evidence consistent with our theoretical results.

Our main contributions of this study are as follows:
\begin{itemize}
\item We formulate mode connectivity for QML on the reachable unitary Lie group generated by the DLA, rather than directly in parameter space.
\item We prove a path-connectivity result for low-loss sublevel sets under a critical-value-free condition.
\item We clarify the role of overparameterization as a condition enabling local lifting of Lie-group paths to parameter space.
\item We provide toy numerical evidence based on geodesic interpolations between trained unitaries.
\end{itemize}

\section{Related work}
Hamilton et al. \cite{hamilton2021} numerically investigated mode connectivity in the loss landscape of parameterized quantum circuits.
Using three-qubit circuits with at most 13 trainable parameters, they applied the nudged elastic band method to identify piecewise low-loss paths between independently obtained minima.
Their results provided empirical evidence that connected low-loss regions can exist in the parameter space of parameterized quantum circuits.
However, the geometric or algebraic origin of such connectivity was not theoretically characterized.
In contrast, our study investigates mode connectivity from a Lie-theoretic perspective, focusing on the geometry of the reachable Lie group and the vector-space structure of its dynamical Lie algebra.
Thus, our study takes a step toward a theoretical framework for these empirical observations based on Lie-group geometry and dynamical Lie algebras.

\section{Method}
First, we introduce the following setup:\\
\noindent
{\bf Setup}
\begin{itemize}
\item Let a parameterized unitary be $ U(\bm{\theta}) = \prod_{j=1}^{L} e^{-i \theta_{j} G_{j}} $, where $ L $ denotes a depth (or number of parameters), $ \bm{\theta} = (\theta_{1}, \ldots, \theta_{L})^{\mathsf{T}} \in \bm{\Theta}_{L} \subset \mathbb{R}^{L} $, and $ G_{j} $ denotes a generator, where $ \mathsf{T} $ denotes transpose.
\item Lie group $ H $ (= $ \exp(\mathfrak{h}) $) is a compact connected Lie group, where $ \mathfrak{h} $ denotes the DLA generated by the generators.
\item Let $ \epsilon_{g} $ be the global minimum value of the loss function $ loss: \, H \rightarrow \mathbb{R} $.
\item DLA $ \mathfrak{h} $ is defined as $ \mathfrak{h} = \operatorname{Lie}\{ -iG_{1}, -iG_{2}, \ldots , -iG_{L} \} $, where $ \operatorname{Lie} $ denotes the Lie algebra generated by the enclosed operators, namely the smallest Lie algebra containing them.
\end{itemize}

Next, we define the mode connectivity as follows:\\
\begin{definition} [Mode connectivity]
  A low loss set $ M $ is given as follows:
  \begin{equation}
    M_{\epsilon_g + \epsilon} = \{ U \in \exp(\mathfrak{h}) \mid loss(U) \leq \epsilon_{g} + \epsilon \},
  \end{equation}
  where $ \epsilon > 0 $.
  For $ U_{a} \in M_{\epsilon_g + \epsilon} $ and $ U_{b} \in M_{\epsilon_g + \epsilon} $, if there exists a continuous path $ r: \, [0, 1] \rightarrow H $ such that $ r(0) = U_{a} $, $ r(1) = U_{b} $, and $ r(t) \in M_{\epsilon_g + \epsilon}$ ($ t \in [0, 1]$), $ M_{\epsilon_g + \epsilon} $ has mode connectivity.
  Here, $ r:[0, 1] \rightarrow H $.
\end{definition}

To construct the main theorem, we state a lemma and two assumptions.

\begin{lemma} [Existence of tangent descent directions on the reachable Lie group]\label{lem1}
  Let the boundary of $ M $ be $ \partial M_{\epsilon_g + \epsilon} \coloneqq \{ U \in H \mid loss(U) = \epsilon_{g} + \epsilon \} $.
  Assume that $ U $ is not a critical point of $ loss $ on $ H $.
  In a neighborhood of the boundary of $ M $, there exists a tangent direction $ \frac{d U}{d t} $ such that $ \frac{d loss(U)}{d t} = \braket{ \nabla_{H} loss(U), \frac{d U}{d t} }_{{\rm F}} < 0 $, where $ {\rm F} $ denotes the Frobenius inner product.
\end{lemma}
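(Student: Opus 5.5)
The plan is to reduce the statement to elementary first-order calculus on the compact Lie group $H$. We regard $H$ as an embedded submanifold of the unitary matrices, with the Riemannian metric induced by the real Frobenius inner product $\braket{A,B}_{\rm F} = \operatorname{Re}\operatorname{tr}(A^{\dagger}B)$. We assume, as is implicit in the setup, that $loss$ is $C^{1}$ on $H$. At a point $U \in H$ the tangent space is $T_U H = \{ XU \mid X \in \mathfrak{h} \}$, since right translation identifies $T_U H$ with $\mathfrak{h}$. We then define the Riemannian gradient $\nabla_{H} loss(U) \in T_U H$ as the unique tangent vector satisfying $d\,loss_U(V) = \braket{\nabla_{H} loss(U), V}_{\rm F}$ for all $V \in T_U H$. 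Equivalently, it is the orthogonal projection of the ambient Euclidean gradient onto $\mathfrak{h}U$; existence and uniqueness follow from the Riesz representation theorem on the finite-dimensional inner product space $T_U H$. By definition, $U$ is non-critical if and only if $\nabla_{H} loss(U) \neq 0$.

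Next we fix $U$ in a neighborhood of $\partial M_{\epsilon_g+\epsilon}$ that is non-critical. We choose the steepest descent direction $\frac{dU}{dt} = -\nabla_{H} loss(U)$, realized by an explicit curve inside $H$. Writing $\nabla_{H} loss(U) = X_U U$ with $X_U \in \mathfrak{h}$, we set $\gamma(t) = e^{-tX_U}U$. This curve stays in $H$ because $\exp(\mathfrak{h}) \subset H$ and $H$ is a group. It satisfies $\gamma(0)=U$ and $\dot\gamma(0) = -X_U U = -\nabla_{H} loss(U)$. The chain rule then gives
\begin{equation*}
\left.\frac{d\,loss(\gamma(t))}{dt}\right|_{t=0} = \braket{\nabla_{H} loss(U), -\nabla_{H} loss(U)}_{\rm F} = -\|\nabla_{H} loss(U)\|_{\rm F}^{2} < 0 .
\end{equation*}
This proves the pointwise claim. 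More generally, any $V = YU$ with $Y\in\mathfrak{h}$ and $\braket{\nabla_H loss(U), YU}_{\rm F}<0$ works, so the descent directions form an open half-space of $T_UH$.

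To obtain the statement on a whole neighborhood of the boundary, and not only at isolated points, we use continuity. Assume that no point of $\partial M_{\epsilon_g+\epsilon}$ is critical, which is the hypothesis as intended. The boundary is a closed subset of the compact $H$, hence compact. The map $U \mapsto \|\nabla_{H} loss(U)\|_{\rm F}$ is continuous and positive on it, so it attains a minimum $c>0$. By continuity there is then an open neighborhood $N$ of $\partial M_{\epsilon_g+\epsilon}$ on which $\|\nabla_{H} loss\|_{\rm F} > c/2$. On $N$ the field $-\nabla_{H} loss$ is a continuous tangent field along which $loss$ strictly decreases, uniformly at rate at least $c^2/4$. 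This uniform rate is the form needed later for deforming paths into $M_{\epsilon_g+\epsilon}$.

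The main obstacle is bookkeeping rather than depth. We must make sure the gradient is taken in $T_UH = \mathfrak{h}U$ and not in the full $\mathfrak{u}(2^n)$. Otherwise the resulting direction may leave the reachable group, and the Frobenius pairing with a non-tangent vector would not equal $d\,loss_U$. The projection step, together with the explicit curve $e^{-tX_U}U$, is what handles this. A second subtlety is that the lemma's hypothesis is literally pointwise. The uniform neighborhood version requires non-criticality on all of $\partial M_{\epsilon_g+\epsilon}$, which we would state explicitly; this foreshadows the critical-value-free band assumption of the main theorem.
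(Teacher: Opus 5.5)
Your proof is correct and takes the same approach as the paper: choose $\frac{dU}{dt} = -\nabla_{H} loss(U) \in T_U H$ and observe that $\braket{\nabla_H loss(U), -\nabla_H loss(U)}_{\rm F} = -\|\nabla_H loss(U)\|_{\rm F}^2 < 0$ by non-criticality. The paper writes tangent vectors as $U\omega$ rather than your $XU$, which makes no difference since $U\mathfrak{h} = \mathfrak{h}U$. Your explicit curve $e^{-tX_U}U$, your definition of the gradient on $T_UH$, and your compactness argument giving a uniform neighborhood of $\partial M_{\epsilon_g+\epsilon}$ are added rigor that the paper's pointwise proof does not contain.
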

\begin{proof}
  Since $ U = \exp(\mathfrak{h}) $, $ \frac{d U}{d t} = U \omega $, $ U \omega \in T_{U} H $, and $ \omega \in \mathfrak{h} $.
  Here $ T_{U} $ is a tangent space of $ H $ in $ U $.
  When $ \frac{d U}{d t} = - \nabla_{H} loss(U) $, we have $ -\nabla_{H} loss(U) \in T_{U} H $, and for $ \omega \in \mathfrak{h} $, $ -\nabla_{H} loss(U) = U \omega $.
  Therefore, $ \braket{ \nabla_{H} loss(U), -\nabla_{H} loss(U) }_{{\rm F}} = -\| \nabla_{H} loss(U) \|^{2}_{{\rm F}}< 0 $, where $ \| \cdot \|_{\rm F} $ denotes the Frobenius norm.
\end{proof}

Based on Lemma \ref{lem1}, near the boundary of the low loss set, there exists a Lie algebra direction toward the inside of the set.
In addition, the Lie algebra direction is realized by a parameter velocity in the parameterized unitary.

For overparameterization \cite{larocca2023}, we state the following remark.
\begin{remark} [Relation to parameter-space connectivity]
  The Lie-group mode connectivity considered in this study does not necessarily imply global parameter-space connectivity, because the inverse image $ F_{L}^{-1}(r(t)) $ may have multiple connected components.
  However, if the map $ F_{L} $ is a submersion along a given Lie-group path and compatible local lifts can be chosen, then $ r(t) $ can admit a lift in parameter space, at least locally and piecewise smoothly.
  Thus, overparameterization is interpreted as a condition that makes the Lie-group connectivity visible in parameter space, rather than as the fundamental origin of connectivity itself.
\end{remark}
In QML, overparameterization should not be understood solely as having many parameters.
Rather, it is closely related to whether the parameter-to-unitary map has sufficient rank to cover tangent directions of the reachable Lie group, whose dimension is determined by the DLA.

We assume that the global minimizer set does not decompose into multiple disconnected orbits under the action of the dynamical Lie group $ H  = \exp(\mathfrak{h}) $.
\begin{assumption} [Connectedness of the global minimizer set]\label{ass1}
  We assume that the global minimizer $ M_{g} = \{ U \in H, loss(U) = \epsilon_{g} \} $ is connected.
\end{assumption}

\begin{remark} [A sufficient condition for Assumption \ref{ass1} is the single-orbit condition]
  If there exists a global minimizer $ U_{*} $ such that $ M_{g} = K U_{*} $, where $ K \subset H $ is a connected symmetry subgroup preserving the loss, then $ M_{g} $ is connected.
  Since $ H $ is connected and the orbit map $ \psi: \, H \rightarrow M_{g} $, $ \psi(V) = V U_{*} $, is continuous and surjective,
  $ M_{g} $ is the continuous image of a connected set.
\end{remark}

Here Assumption \ref{ass1} motivates the following assumption.

\begin{assumption} [Near-minimum connectedness]\label{ass2}
  For sufficiently small $ \delta > 0 $, $ M_{\epsilon_{g} + \delta} $ is path-connected.
\end{assumption}
Assumption \ref{ass2} is naturally satisfied, for example, when $ M_{g} $ is a connected nondegenerate minimum manifold.
We do not derive Assumption \ref{ass2} from first principles.
Rather, we regard it as a geometric hypothesis whose validity is expected to improve with increasing expressibility of the reachable Lie group.

Now, we construct the main result as a theorem.
\begin{theorem} [Lie-group mode connectivity in a critical-value-free low-loss band]\label{th1}
  Let $ H = \exp( \mathfrak{h}) $ be the compact connected reachable Lie group generated by DLA $ \mathfrak{h} $ of a parameterized unitary.
  Let $ loss: \, H \rightarrow \mathbb{R} $ be a smooth loss function.
  For $ c \geq \epsilon_{g} $, define the low-loss sublevel set $ M_{c} = \{ U \in H \mid loss(U) \leq c \} $.
  Assume that there exists $ \delta > 0 $ such that $ M_{\epsilon_{g} + \delta} $ is path-connected and $ loss $ has no critical points in the band $ loss^{-1}((\epsilon_{g} + \delta, \epsilon_{g} + \epsilon]) $.
  Then $ M_{\epsilon_{g} + \epsilon} $ is path-connected.
  Therefore, any two reachable unitaries $ U_{a} $ and $ U_{b} \in M_{\epsilon_{g} + \epsilon} $ can be connected by a continuous path $ r:[0, 1] \rightarrow H $ satisfying $ r(0) = U_{a} $, $ r(1) = U_{b} $, and $ r(t) \in M_{\epsilon_{g} + \epsilon} $ for all $ t \in [0, 1] $.
\end{theorem}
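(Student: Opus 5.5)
The plan is the standard Morse-theoretic deformation argument: the gradient flow of $loss$ on $H$ retracts $M_{\epsilon_g+\epsilon}$ onto $M_{\epsilon_g+\delta}$, and path-connectedness of the smaller set is assumed. Write $a=\epsilon_g+\delta$ and $b=\epsilon_g+\epsilon$. If $b\le a$ there is nothing to prove, so assume $a<b$.

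First I would fix a bi-invariant Riemannian metric on the compact group $H$, for instance the one induced by the Frobenius inner product on $\mathfrak{h}$. With respect to it I take the Riemannian gradient $\nabla_H loss$, which lies in $T_U H = U\mathfrak{h}$ as in Lemma~\ref{lem1}. Since $H$ is compact, the band $B=loss^{-1}([a',b])$ is compact for any $a<a'$. By hypothesis $\nabla_H loss$ does not vanish on it, so $\|\nabla_H loss\|_{\rm F}\ge\eta>0$ on $B$. Compactness of $H$ also makes the gradient flow complete.

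The key step is a flow that pushes each point of $M_b$ down into $M_a$ while staying inside $M_b$. Given $U\in M_b$, if $loss(U)\le a$ I take the constant path. Otherwise I follow the normalized descent flow
\begin{equation*}
  \dot\gamma=-\nabla_H loss(\gamma)/\|\nabla_H loss(\gamma)\|_{\rm F}^{2},
\end{equation*}
along which $\frac{d}{dt}loss(\gamma(t))=-1$; this is the descent direction supplied by Lemma~\ref{lem1}. The flow is well defined as long as $loss(\gamma)\in(a,b]$, since there the gradient is nonzero. Because the loss decreases, $\gamma$ stays in $M_b$. After time $loss(U)-a$ it reaches $loss^{-1}(a)\subset M_a$.

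Then, for $U_a,U_b\in M_b$, I concatenate three paths: the descent path from $U_a$ into $M_a$, a path inside $M_a$ (which exists since $M_a$ is path-connected by hypothesis, as in Assumption~\ref{ass2}), and the reverse of the descent path from $U_b$. The concatenation is continuous, stays in $M_b$, and gives the required path $r$.

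The main obstacle is that the hypothesis excludes critical points only on the half-open band $(a,b]$. Critical points may therefore accumulate at level $a$, so $\|\nabla_H loss\|$ is not bounded below near $loss^{-1}(a)$. The normalized vector field may then blow up there, and the flow line need not converge as its level tends to $a$. I would try two fixes.
\begin{itemize}
\item Stop the flow at a level $a'\in(a,\,loss(U))$ instead of at $a$. The problem then becomes connecting points of $M_{a'}$, which contains $M_a$ but is not assumed connected.
\item Shrink $\delta$ slightly, which is allowed by the ``sufficiently small $\delta$'' reading of Assumption~\ref{ass2}. Since critical values of a smooth function on a compact manifold form a closed set, this puts a critical-free margin $[a-\eta',b]$ around level $a$.
\end{itemize}
I expect the second fix to be the cleaner one. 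Alternatively one can note that the unnormalized flow $-\nabla_H loss$ started in the band either enters $M_a$ in finite time or has an $\omega$-limit set consisting of critical points at level exactly $a$, hence in $M_a$. A point of $M_a$ is then reached in the limit, though its continuity as an endpoint needs care.
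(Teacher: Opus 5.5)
Your route is the paper's: the paper cites Milnor's deformation theorem (Theorem~3.1 in \cite{milnor1963}) to retract $M_{\epsilon_g+\epsilon}$ onto $M_{\epsilon_g+\delta}$, and your normalized flow with $\frac{d}{dt}loss(\gamma)=-1$ is exactly the proof of that theorem. You are right that Milnor's theorem needs $loss^{-1}([a,b])$ to be critical-free, whereas the hypothesis only covers $(a,b]$; the paper passes over this point. Your proposal, however, leaves the resulting gap open. The key step is that every $U$ with $a<loss(U)\le b$ can be joined to $M_a$ inside $M_b$, and this is never established. You already concede that your first fix needs connectedness of $M_{a'}$, which is not given. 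The second fix fails for two reasons. First, the theorem fixes a single $\delta$, and Assumption~\ref{ass2} is not one of its hypotheses. More importantly, lowering $a$ moves the old level $a$ into the band, which is exactly where critical points are allowed. Closedness of the critical-value set gives a free margin $[a-\eta',a]$ only when $a$ is a regular value, and in that case Milnor's theorem already applies on $[a,b]$, so there was no obstacle to begin with. In the third fix, the claim that a point of $M_a$ is reached in the limit is false for a merely smooth loss. A gradient line can spiral onto a whole circle of critical points at level $a$ without converging, so there is no endpoint to attach. For real-analytic losses the {\L}ojasiewicz inequality would rescue this, but the theorem assumes only smoothness.

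The missing idea is that you need neither convergence of flow lines nor a retraction. For $U\in M_b\setminus M_a$, let $\gamma$ be the unnormalized negative gradient flow line; it stays in $M_b$. The value $loss(\gamma(t))$ cannot stay above any fixed $a'>a$ forever, because on the compact critical-free set $loss^{-1}([a',b])$ one has $\frac{d}{dt}loss(\gamma)=-\|\nabla_H loss\|^2\le-\eta(a')^2$. Hence either $\gamma$ enters $M_a$ in finite time, or, by compactness of $H$, some $\gamma(t_n)$ with $t_n\to\infty$ converges to a point $V$ with $loss(V)=a$. In both cases $\overline{\gamma([0,\infty))}$ is a connected subset of the closed set $M_b$ that contains $U$ and a point of $M_a$. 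Since $M_a$ is connected, every point of $M_b$ lies in the component containing $M_a$, so $M_b$ is connected. Next, $b$ lies in the critical-free band, so it is a regular value. Therefore $M_b$ is a smooth manifold with boundary and hence locally path-connected, and connected plus locally path-connected gives path-connected. Finally, your opening reduction is wrong. If $b<a$, then $M_b\subset M_a$, and path-connectedness does not pass to the smaller set, so the statement must be read with $\epsilon\ge\delta$.
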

\begin{proof}
  Since $ H $ is compact and $ loss $ is smooth, the negative gradient flow of $ loss $ on $ H $ exists for all finite time.
  By assumption, $ loss $ has no critical points in the band $ loss^{-1}((\epsilon_{g} + \delta, \epsilon_{g} + \epsilon]) $.
  Therefore, by the deformation theorem (Theorem 3. 1 in \cite{milnor1963}) of Morse theory \cite{milnor1963,nicolaescu2011}, the sublevel set $ M_{\epsilon_{g} + \epsilon} $ deformation retracts onto $ M_{\epsilon_{g} + \delta} $.
  In particular, $ M_{\epsilon_{g} + \epsilon} $ and $ M_{\epsilon_{g} + \delta} $ have the same number of connected components.
  By the near-minimum connectedness assumption, $ M_{\epsilon_{g} + \delta} $ is path-connected.
  Hence $ M_{\epsilon_{g} + \epsilon} $ is path-connected.
  Therefore, any two reachable unitaries $ U_{a}, U_{b} \in M_{\epsilon_{g} + \epsilon} $, there exists a path $ r:[0, 1] \rightarrow H $ such that $ r(0) = U_{a} $, $ r(1) = U_{b} $, and $ r(t) \in M_{\epsilon_{g} + \epsilon} $.
\end{proof}

This theorem does not claim that the low-loss region in parameter space $ \bm{\Theta}_{L} $ is connected.
Instead, it shows that the physically relevant reachable unitaries form a connected low-loss region on $ H $.
In addition, we state the following remark.
\begin{remark}
  Even if $ M_{\epsilon_g + \epsilon} $ is path-connected, by the structure of $ F_{L}^{-1} $, there seems to exist several components in the parameter space.
  However, connectivity still depends on the topology of the sublevel sets and the absence of critical values in the relevant loss band.
\end{remark}

According to Theorem \ref{th1}, mode connectivity in QML does not arise in the case of a large number of parameters in a parameterized unitary.
However, when the low-loss sublevel set on the reachable Lie group generated by the DLA is connected, the low loss sublevel set is connected and the parameterized unitary is overparameterized to lift a low loss path on the Lie group to the parameter space, mode connectivity arises.
We conjecture that, for parameterized unitaries, in addition to overparameterization, implicit bias of optimization toward low-loss regions makes a situation of arising mode connectivity.

\section{Numerical experiments and results}\label{sec4}
To provide numerical evidence consistent with low-loss connectivity, we examined the loss barrier along geodesic interpolations between pairs of trained unitaries $ U_{a} $ and $ U_{b} $ using $ r(t) = U_{a} \exp(t \log(U_{a}^{\dagger} U_{b})) $.
In this study, we used barrier \cite{garipov2018} as a quantitative metric: $ B(U_{a}, U_{b}) = \max_{t} loss(r(t)) - \max(loss(U_{a}), loss(U_{b})) $.
Barrier is defined as the maximum loss along the interpolation path between two solutions minus the endpoint loss.
If the connectivity is preserved, the value of $ B $ becomes near zero.
We note that the principal logarithm used in the interpolation was not explicitly checked to lie in the DLA $ \mathfrak h $.
Hence, the numerical experiments should be interpreted as evidence for near-zero loss barriers under logarithmic unitary interpolation, rather than as a rigorous verification of DLA-reachable path connectivity.

First, for training parameterized unitaries, we generated input--output training data as follows.
The output state of a quantum circuit, which served as a target model, was obtained as $ \ket{\phi} = \left( \prod_{j=1}^{5} \exp(-i \theta_{j} G_{j}) \right) \, RY(x)^{\otimes 5} \, \ket{0}^{\otimes 5} $.
The corresponding target value was obtained as $ \hat{y} = \braket{\phi | \, Z(0) \, | \phi} $.
Here, the input value $ x $ was uniformly sampled from $ [0, 2\pi] $, and $ \theta_{j} $ was uniformly sampled from $ [-0.01, 0.01] $.
The generator $ G_{j} $ was sampled from the set of 5-qubit Pauli strings, excluding the all-identity string.
We repeated this sampling and evaluation procedure $ 20 $ times to obtain a training dataset of size $ 20 $.

For training, a parameterized unitary $ \prod_{j=1}^{L} \exp(-i \theta_{j} G_{j}) $ was used.
The generator $ G_{j} $ was also sampled from the set of 5-qubit Pauli strings, excluding the all-identity string.
The number of epochs was set to 500, and loss was measured by using the root mean square error (RMSE).
The optimizer was the simultaneous perturbation stochastic approximation (SPSA) \cite{gacon2021}.
The learning rate was 0.01, and the momentum term was 0.5.
The initial value of $ \bm{\theta} $ was uniformly sampled from $ [-0.1, 0.1] $.
We trained ten parameterized unitaries with ten different random seeds, varying $ L $ (the number of parameters) over $ \{5, 10, 20 \}$, and calculated $ B $ for 45 combinations of the trained unitaries.
The average and standard deviation results are shown in Table \ref{tab1}.
The parentheses indicate the standard deviation.
In addition, we calculated the DLA dimension of the generators.
Because our theoretical formulation concerns the low-loss geometry on the reachable Lie group $ H=\exp(\mathfrak{h}) $, rather than directly in the parameter space, we also computed the dimension of the DLA $ \mathfrak{h} $ for each randomly generated generators.
This quantity provides a diagnostic of the size of the reachable unitary manifold associated with the chosen Pauli generators.
\begin{table}[htb]
  \centering
  \caption{Results of loss of trained unitaries, $ B $, and DLA dimension}\label{tab1}
  \vspace{5pt}
  \begin{tabular}{cccc}\hline
    L (number of parameters) & Loss & B & $ {\rm dim}(\mathfrak{h}) $\\ \hline
    5 & 2.67E-04 (3.78E-04) & 4.42E-05 (7.86E-05) & 17.4 (4.82)\\
    10 & 4.62E-04 (5.43E-04) & 3.07E-05 (6.94E-05) & 388 (123.15)\\
    20 & 1.51E-03 (6.86E-04) & -1.35E-08 (2.15E-07) & 973.5 (148.50)\\ \hline
  \end{tabular}
\end{table}
In the table, the small negative value of $ B $ for $ L = 20 $ is attributed to numerical errors, since $ B \geq 0 $ by definition under exact evaluation and its magnitude is negligible compared with the loss scale.
The B values were much smaller than the loss values, and thus the B values are close to zero: $ B \approx 0 $.
Here the maximum possible DLA dimension for five qubits is $ {\rm dim} \mathfrak{su}(32) = 1023 $.
For $ L = 20 $, the DLA dimension reached 1023 for nine of the generated instances.
Increasing the number of parameters tends to promote overparameterization.

In addition, we show the results of $ loss(r(t)) $, where $ t \in [0, 1] $.
Figure~\ref{fig1} shows the average $ loss(r(t)) $.
The error bars indicate the standard deviation.
\begin{figure}[htbp]
  \centering
  \begin{minipage}{15cm}
    \SetFigLayout{2}{2}
    \subfigure[$L = 5$]{\includegraphics[width=7cm]{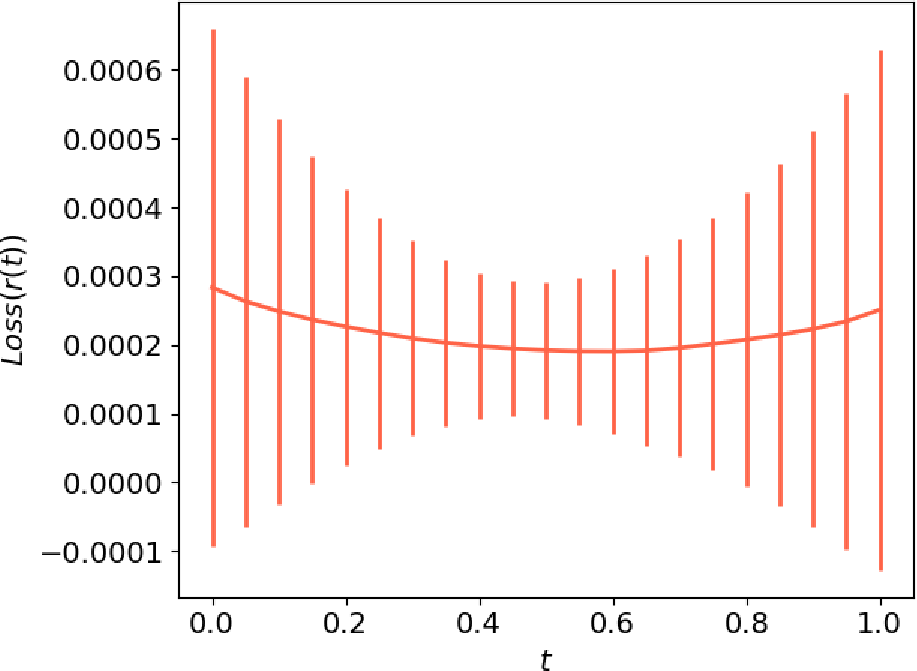}}
    \hfill
    \subfigure[$L = 10$]{\includegraphics[width=7cm]{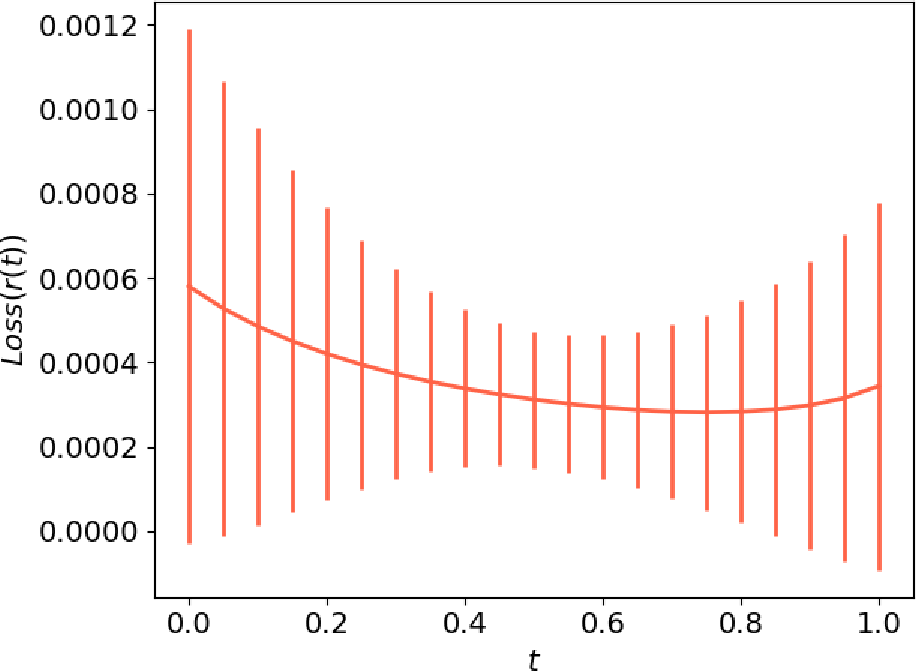}}\\
    \centering
    \subfigure[$L = 20$]{\includegraphics[width=7cm]{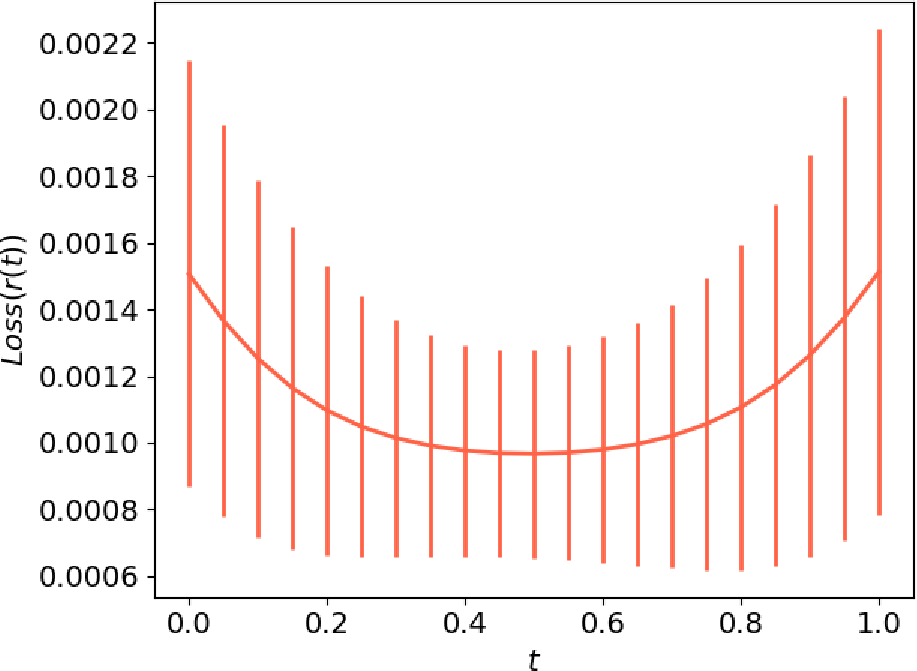}}
  \end{minipage}
  \caption{Results of $ loss(r(t)) $ for $ L \in \{5, 10, 20\} $}\label{fig1}
\end{figure}
As shown in the figures, $ loss(r(t)) $ values were not over the endpoint loss.
Although these experiments are limited to toy settings, the near-zero barriers provide empirical evidence consistent with the proposed Lie-group interpretation of mode connectivity.

\section{Conclusion}\label{sec5}
In this study, we theoretically analyzed mode connectivity of parameterized unitaries in QML from the standpoint of the DLA.
For mode connectivity to be observed in parameter space, overparameterization plays an important role by enabling local lifting of low-loss paths on the reachable Lie group.
Numerical experiments in a toy-problem setting were performed, and evidence supporting our theoretical results was obtained.
However, we think that further large-scale experiments are needed to confirm our results.
In future work, we should study the relationship between mode connectivity and the dimension of DLA, and barren plateaus.
We conjecture that increasing the dimension of DLA increases the likelihood that low-loss sublevel sets become connected.
In addition, alternative approaches based on symmetries of the DLA may provide further insight into mode connectivity and should be investigated in future work.

\bibliographystyle{plain}
\bibliography{references}

\end{document}